\def\ps@pprintTitle{%
 \let\@oddhead\@empty
 \let\@evenhead\@empty
 \def\@oddfoot{}%
 \let\@evenfoot\@oddfoot}
\newtheorem{theo}{Theorem}
\begin{document}
\renewcommand{\baselinestretch}{1.0}
%\maketitle

\pagestyle{plain}

\begin{frontmatter}

\title{An improved solution approach for the Budget constrained Fuel Treatment Scheduling problem}
\date{}

 \author[1,2]{Federico Della Croce}
  \author[1]{Marco Ghirardi}
 \author[1]{Rosario Scatamacchia}
 %\corref{cor1}
%\cortext[cor1]{Corresponding author.}
 
  \address[1]{\small Dipartimento di Ingegneria Gestionale e della Produzione, Politecnico di Torino,\\ Corso Duca degli Abruzzi 24, 10129 Torino, Italy, \\{\tt \{federico.dellacroce, marco.ghirardi, rosario.scatamacchia\}@polito.it }}
 \address[2]{CNR, IEIIT, Torino, Italy}
 
\begin{abstract}
This paper considers
the budget constrained fuel treatment scheduling (BFTS) problem where,
in the context of wildfire mitigation, 
the goal is to inhibit the potential of fire spread in a landscape by proper fuel treatment activities. Given a time horizon represented by consecutive unit periods, the landscape is divided into cells and represented as a grid graph where each cell has a fuel age that increases over time and becomes old if no treatment is applied in the meantime: this induces a potential high fire risk whenever two contiguous cells are old. Cells fuel ages can be reset to zero under appropriate fuel treatments but there is a limited budget for treatment in each period. The problem calls for finding a suitable selection of cells to be treated so as to minimize the presence of old contiguous cells over the whole time horizon. We prove that problem 
BFTS is strongly $NP$-complete on paths and thus on grid graphs and show that no polynomial time approximation algorithm exists unless 
$\mathcal{P} = \mathcal{NP}$.
We provide an enhanced integer linear programming  formulation of the problem with respect to the relevant literature that shows up to be efficiently solved by an ILP solver on reasonably large size instances. Finally, we consider a harder periodic variant of the problem with the aim of finding  a cyclic treatment plan with cycles of length T and propose a matheuristic approach capable of efficiently tackling those instances where an ILP solver applied to the ILP formulation runs into difficulties.
\end{abstract}

\begin{keyword}
OR in Natural Resources, Fuel management, Prescribed burning, Budget constrained fuel treatment scheduling, 
Matheuristics.
\end{keyword}

\end{frontmatter}

\section{Introduction}
One of the most studied problems in wildfire mitigation is Fuel Management. Fuel Management aims to reduce potential fire intensity and fire spread in vegetation areas. Recent interest in fire and fuel management is particularly motivated by short fire return intervals and new zones where fire was excluded during the 20th century. This creates a need for the long-term reduction of fuel loads. The particularity of this management problem is that it requires new and spatially explicit management science methods.
The objective of Fuel Management is the modification of potential fire behavior or fire effects by undertaking a minimum action on its fuel. The main idea consists in applying a treatment on selected specific areas, either by harvesting or burning according to some ranking of risk or effectiveness. In particular, several studies recognized the effectiveness of prescribed burning in reducing fire intensity and severity of wildfires (see, e.g., \cite{BOAL09}, \cite{FEBO03}).
Several fuel management optimization problems and related solution techniques were considered in recent years.
We mention among others one of the first papers in literature
applying mixed integer linear programming (MILP) in the context of wildfire fuel management \cite{WRK08}. Since then, many operations research approaches have been proposed applied to schedule fuel treatment activities. We refer to the surveys in \cite{MHH12} and \cite{GMP17} on the matter. 
Recently, a MILP-based multi-period optimization framework 
for prescribed burning activities over a finite planning horizon
was proposed in \cite{MPZ18}.
In \cite{BMMRG19},
a stochastic programming approach for optimizing
fuel reduction treatment allocation was introduced.
A strongly investigated problem is the so-called 
Budget constrained Fuel Treatment Scheduling (BFTS) problem.
This problem was firstly tackled by means of MILP modeling in \cite{MiHeMa} and gave rise to several publications on related generalized models in \cite{leReHe}, \cite{MiHeMa}, \cite{RaOzHeRe} and \cite{16RaOzHeRe}. 
Some complexity and approximation findings  on fuel treatment scheduling on graphs were proposed in \cite{DeTa15}. 
Our paper focuses on the original BFTS problem 
which is a multi-period problem where the goal is to inhibit the potential of fire spread in a landscape by proper fuel treatment activities (such as prescribed burning). The landscape can be divided in cells for fuel treatments according to some features of interest (e.g. type of vegetation) and/or operational conditions. Each cell has a fuel age that increases over time. A cell becomes ``old'' if its fuel age gets larger than the inhibition period caused by a treatment. High fire risk may occur in the landscape whenever two contiguous vegetation areas are old.  When a fuel treatment occurs, the fuel age of a cell is reset to zero. Also, fuel treatment activities have a cost and there is an available limited budget for treatments in each period. The problem calls for finding a suitable selection of the areas to be treated so as to minimize the presence of old contiguous areas over the whole time horizon.
This problem can be typically represented by means of grid graphs where vertices represent areas and edges represent the connection between adjacent areas. Two areas might be connected not only in space but also according to other conditions such as prevailing wind directions. 
To the authors' knowledge, the complexity status of BFTS on grid graphs
is still open. Correspondingly, we show that the decision version of BFTS on paths
(and henceforth also on grid graphs) is strongly $NP$-complete. Further, we show that no polynomial time approximation algorithm exists for BFTS on paths unless $\mathcal{P} = \mathcal{NP}$.
 Next we focus on the practical solvability of the problem.
In the ILP formulations in \cite{leReHe} and \cite{MiHeMa}, variables $X_{it}$ indicate if area $i$ is or not treated at time period $t$ and variables $O_{it}$ indicate if area $i$ is or not old at time period $t$. In these formulations, however, big-M constraints are used. This may strongly limit the performance of MILP solvers in reaching optimal/suboptimal solutions. 
We show that a pair of alternative and much more simplified ILP formulations exists that avoids the need of big-M constraints. The best of these formulations when tested by means of MILP solver CPLEX 12.8 allows to optimally solve reasonably large size instances of the problem within limited CPU time.

We consider then a new harder problem variant with periodic planning (hereafter denoted PBFTS) where the goal is to find a cyclic treatment plan with cycles of length $T$ starting from an initial state to be determined. On this problem variant, CPLEX 12.8 applied to the proposed ILP starts running into difficulties with instances with $400$ cells.
Correspondingly, a matheuristic approach is proposed in order to deal with larger size problems.
Matheuristics are methods that attracted the attention of the community of researchers (see for instance \cite{Ball11,DCGS13}), giving rise to an impressive amount of research in recent years. We mention applications of matheuristics on routing \cite{MLGPP19,SAS20},
 packing \cite{BDCG15,MABRT17}, rostering \cite{DCS14,DNV18}
 and machine scheduling \cite{DCGS14,DCGS19,FPR17} just to cite a
 few of them.
Matheuristics rely on the general idea of exploiting the strength of both metaheuristic algorithms and exact methods. Here, the proposed matheuristic algorithm is based on an overarching neighborhood search approach with an intensification search phase realized by a MILP solver. 
The proposed MILP formulation for BFTS and PBFTS 
constitutes a backbone model generalizable also
to the variants presented  in \cite{leReHe}, \cite{MiHeMa}, \cite{RaOzHeRe} and \cite{16RaOzHeRe}. 
%Several further requirements can be added inducing distinct solutions. 
We cite, among others, the possible request of having no adjacent areas treatment in the same time period and/or minimal/maximal periods of time between two treatments of the same areas. The effectiveness of the developed models and algorithms is evaluated on a representative set of instances with up to $1225$ cells.

The paper proceeds as follows.
Section \ref{Sec:notat} provides the relevant notation for BFTS and discusses complexity and approximation issues.  Section \ref{Sec:formulations} presents enhanced ILP formulations both for BFTS and its periodic variant. Section \ref{Sec:Matheuristic}
presents the proposed matheuristic approach and Section \ref{SecTests}
provides the relevant related computational results both on BFTS and PBFTS. Section \ref{Sec:concl} concludes the paper with final remarks.

\section{Notation and complexity of BFTS}
\label{Sec:notat}
In BFTS, a set $A$ of areas (or cells) to be treated and a time horizon $T$ are given. For each area $i$, we denote by $\phi_i$ the set of its adjacent areas, by $a_i$  its initial fuel age and by $\sigma_i$ its fuel age threshold value. The threshold $\sigma_i$ defines the number of periods, starting from the last treatment, after which area $i$ becomes old. In each period $t = 1, \dots, T$,  there are treatment costs for each area $i$ denoted by $c_{it}$ and a budget on fuel treatments denoted by $b_t$. Finally, we denote by $w_{tij}$ the cost associated with the presence of two old adjacent areas $i$ and $j$ in period $t$. The problem calls for minimizing the weighted sum of old contiguous areas over the whole time horizon by a proper selection of the areas to treat without exceeding the available budget in each period.\\

As BFTS is typically of interest for problems that can be represented by means of grid graphs, we are interested in determining the computational 
complexity of the BFTS problem on this specific class of graphs or any related subclass. 
Let denote by BFTS-B the decision version of problem BFTS where we search for a feasible solution of BFTS 
so that the weighted sum of old contiguous areas does not exceed a given bound $B$ within the whole time horizon T. 
We prove that  BFTS-B is strongly $NP$-complete on paths (and consequently also on grid graphs) by reduction from 
the 3-PARTITION problem below which is well known to be $NP$-complete in the strong sense \cite{GJ82}.\\

\noindent {\bf 3-PARTITION} \\
INSTANCE:
Positive integers $n,K$ and a set of integers 
$S = \{s_1,s_2, \cdots, s_{3n} \}$ 
with $\sum_{i=1}^{3n}s_i = nK$ and $\frac{K}{4} < s_i < \frac{K}{2} $ for $i=1,\cdots,3n$. \\
QUESTION:
Does there exist a partition $< S_1, S_2, \cdots, S_n >$ of
S into 3-elements sets such that,
for each $j$, $\sum_{s_i \in S_j}s_i =K$? 

\begin{theo}
\label{strongNP}
BFTS-B  on paths is strongly $NP$-complete.
\end{theo}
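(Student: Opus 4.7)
My plan is to reduce 3-PARTITION to BFTS-B on paths. Given a 3-PARTITION instance with $3n$ integers $s_1,\dots,s_{3n} \in (K/4, K/2)$ summing to $nK$, I build a path of $6n+1$ cells
\[
e_1,\, u_1,\, e_2,\, u_2,\, \dots,\, e_{3n},\, u_{3n},\, e_{3n+1},
\]
with horizon $T = n$, per-period budget $b_t = K$, unit adjacency weights $w_{tij}=1$, and target bound $B = 3n^2 - 3n$. Each ``item'' cell $u_i$ has treatment cost $c_{u_i,t} = s_i$, with threshold and initial age chosen so that (i) if $u_i$ is never treated then it is old in every period, and (ii) a single treatment of $u_i$ in period $t_i$ keeps it fresh in periods $t_i,\dots,T$, so that $u_i$ is old precisely in periods $1,\dots,t_i-1$. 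Each ``sentinel'' cell $e_j$ has prohibitive treatment cost $K+1$ and parameters making it old in every period; since treating a sentinel would exceed the per-period budget, sentinels are never treated in any feasible schedule.

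The analysis proceeds as follows. Because each item's two neighbors in the path are permanently-old sentinels, the objective equals $2\sum_i (\text{number of periods in which } u_i \text{ is old})$. Treating any item a second time only wastes budget, so without loss of generality each item is treated at most once; letting $m_t$ denote the number of items treated in period $t$, the objective rewrites as $2(3n^2 - V)$ with $V = \sum_{t=1}^{n}(n+1-t)\,m_t$. The strict inequality $s_i > K/4$ forces $m_t \leq 3$ in every period, whence $V \leq 3\sum_{t=1}^{n}(n+1-t) = 3n(n+1)/2$ and the objective is at least $3n^2 - 3n$. Equality in the bound on $V$ requires $m_t = 3$ for every $t$, so that all $3n$ items must be treated, exactly three per period; combined with the per-period constraint $\sum_{i\text{ treated in }t}s_i \leq K$ and the global equality $\sum_i s_i = nK$, each of these triples must sum to exactly $K$, which is precisely a valid 3-partition. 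The converse is immediate: any 3-partition, scheduled triple-by-period, achieves the bound $B$.

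The step I expect to be the main obstacle is the backward implication---ruling out that the bound $B$ can be reached in some unforeseen way, for instance by a schedule that treats the same item twice, leaves some period under-budget and compensates elsewhere, or exploits slack in the ``old'' predicate. These pitfalls are closed off respectively by the observation that a repeat treatment only displaces budget without improving freshness, by the double-counting identity $\sum_t\sum_{i\text{ treated in }t}s_i = nK$ combined with the per-period cap of $K$, and by choosing the threshold and initial age values explicitly and verifying the freshness dynamics period by period. A final routine check is that all parameters of the constructed instance---$6n+1$ cells, horizon $n$, numerical values bounded by $K+1$, and bound $3n^2-3n$---are polynomial in the 3-PARTITION input, so that the reduction preserves strong NP-completeness.
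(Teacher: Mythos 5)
Your reduction is correct, and it is built on the same gadget as the paper's proof: a path alternating ``item'' cells of cost $s_i$ with untreatable separator cells of cost exceeding the budget, horizon $T=n$, per-period budget $K$, so that $s_i>K/4$ caps treatments at three per period and the total-budget identity $\sum_i s_i = nK$ forces each period to spend exactly $K$. Where you differ is in how the objective threshold forces every item to be treated. The paper sets all initial ages to $0$, all thresholds to $n-1$, and takes $B=0$: then no cell can be old before the final period, so the whole analysis collapses to ``every $u_i$ must be young at $t=n$,'' and the equivalence with 3-PARTITION is immediate. You instead make items old from the start (which needs explicit values such as $a_i=\sigma_i=n-1$, so an item is old at every $t$ until treated and stays young from its first treatment onward --- the properties you invoke do hold, but you should state these values), and you take the positive bound $B=3n^2-3n$, closing the argument with the counting identity $\mathrm{obj}=2(3n^2-V)$, $V=\sum_t (n{+}1{-}t)m_t$, the cap $m_t\le 3$, and the observation that equality forces $m_t=3$ in every period. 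Both analyses are sound and the instances are of the same polynomial size with polynomially bounded numbers, so strong $NP$-completeness follows either way; the paper's $B=0$ variant buys a shorter equality argument (and directly foreshadows the gap construction used for the inapproximability result in Theorem~\ref{TheoInapprox}), while your version shows the hardness persists even when the target bound is strictly positive, at the cost of the extra exchange/counting bookkeeping (discarding repeat treatments, maximizing $V$).
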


\begin{proof}
We focus on a special case of problem BFTS-B where we search for a solution with no old contiguous areas, 
that is with $B = 0$. W.l.o.g we assume that all weights $w_{tij}$ are one. Also, we have a path with $6n-1$ vertices and consider $T=n$.
The path is constituted by 
$3n$ vertices $u_1, \cdots, u_{3n}$ and $3n-1$ vertices $v_1, \cdots, v_{3n-1}$ sequenced as follows: \\
\[u_1 - v_1 - u_2 - v_2 - \cdots -  u_i - v_i - \cdots - u_{3n-1} - v_{3n-1} - u_{3n}.\]
\\
All the $6n-1$ vertices have zero initial age and an identical threshold $\sigma$ equal to $n-1$. All $u_i$ vertices correspond to elements $s_i \in S$ of 3-PARTITION. These vertices have a treatment cost in each period equal to $s_i$.
The budget in each period is $b=\frac{\sum_{i=1}^{3n}s_i}{n}= K$.
All $v_i$ vertices have cost $b+1$, thus these vertices can never be treated. 
Hence, to find a feasible solution with no old contiguous areas, 
all $u_i$ vertices must be treated in one of the periods $(1, \cdots, n)$ so that at period $T$ when all $v_i$ vertices are old, 
all $u_i$ vertices are not old. Correspondingly, the sum of the treatment costs over the $n$ periods is equal to the sum of all the $s_i$ elements in 3-PARTITION, i.e. $\sum_{i=1}^{3n}s_i = nK$.
Besides, the sum of the treatment costs in each period cannot exceed $K$. These conditions imply that this special case of BFTS-B has a solution if and only if 3-PARTITION has a solution: a solution of 3-PARTITION gives a solution of BFTS-B where the elements $s_i$ in each subsets $S_1, \cdots,  S_n$ indicate the associated vertices $u_i$ to be treated in each period $1, \cdots, n$. Likewise, as $\frac{K}{4} < s_i < \frac{K}{2} $ for $i=1,\cdots,3n$, a solution of BFTS-B must treat exactly three vertices in each period with treatment costs equal to $K$, thus providing a solution of 3-PARTITION. 
\end{proof}
\medskip 

In terms of approximability, the strong NP-completeness result of Theorem \ref{strongNP} rules out the existence of a Fully Polynomial Time Scheme (FPTAS) for BFTS even on paths. However, we can prove a more general result that no polynomial time approximation algorithm exists for BFTS on paths unless $\mathcal{P} = \mathcal{NP}$.\\ 
We say that an approximation algorithm for BFTS has a finite approximation ratio $\rho \geq 1$ if it provides a solution value that is not larger than the product between $\rho$ and the optimal solution value in any instance. Also, using the notation for 3-PARTITION, we recall the well known $NP$-complete PARTITION problem \cite{GJ82} below.
\medskip

\noindent {\bf PARTITION} \\
\noindent INSTANCE: Finite set $S$ of $n$ positive integers  $s_1, s_2, \cdots, s_n$. \\
QUESTION: Is there a subset $S' \subseteq S$ such that $\sum_{s_i \in S'} s_i =  \sum_{s_i \in S \backslash S'} s_i = \frac{\sum_{i=1}^{n}s_i}{2}$?
\medskip

By using a reasoning similar to the one in the proof of Theorem \ref{strongNP}, we state the following theorem.
\begin{theo}
\label{TheoInapprox}
No polynomial time approximation algorithm exists for BFTS on paths unless $\mathcal{P} = \mathcal{NP}$.
\end{theo}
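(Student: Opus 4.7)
The plan is to combine a PARTITION-based version of the reduction in Theorem \ref{strongNP} with the standard gap argument that rules out every finite multiplicative approximation when the decision version distinguishes optima $0$ from optima bounded away from $0$.

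First, I would mimic the construction of Theorem \ref{strongNP} but scaled to PARTITION. Given an instance of PARTITION with $n$ positive integers summing to $2P$, I would build a BFTS instance on a path of $2n-1$ vertices alternating cells $u_i$ (one per $s_i$) and never-treatable cells $v_i$, with horizon $T=2$, threshold $\sigma = 1$, zero initial ages, all weights $w_{tij}=1$, per-period budget $b = P$, treatment cost $c_{u_i t} = s_i$, and $c_{v_i t} = P+1$. The role of the $v_i$'s is to ensure that at period $t=2$ they are all simultaneously old, so that an untreated $u_i$ would immediately create an adjacent pair of old cells.

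Second, I would prove the equivalence of answers: PARTITION has a YES answer if and only if the BFTS instance admits a schedule of cost $0$. The forward direction follows by treating the $u_i$'s of each half of the partition in its own period; the backward direction uses the fact that the total treatment cost over both periods equals the total budget $2P$, so every $u_i$ has to be treated exactly once, and the per-period budget $P$ forces a partition of the $s_i$'s into two equal-sum halves.

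Third, I would close with the gap argument. If PARTITION is YES the optimum is $0$, whereas if PARTITION is NO every feasible BFTS solution has integer objective value at least $1$, since some $u_i$ remains untreated in both periods and therefore shares an old pair with a $v_i$ at $t=2$. A polynomial-time algorithm with any finite approximation ratio $\rho \geq 1$ would return value at most $\rho \cdot 0 = 0$ in the first case and at least $1$ in the second, and would therefore decide PARTITION in polynomial time, forcing $\mathcal{P} = \mathcal{NP}$. The main obstacle is the lower bound in the NO case: I must rule out that every feasible schedule still attains objective value $0$ despite the absence of a partition. This is secured by making each $v_i$ untreatable through the cost $b+1$ and by the strict tightness between the total treatment cost $2P$ and the total available budget, which leaves no slack for untreated $u_i$'s once feasibility is enforced.
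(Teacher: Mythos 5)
Your proposal is correct and follows essentially the same route as the paper: the same PARTITION-based path construction with $T=2$, unit thresholds, untreatable separator vertices of cost $b+1$, and the gap argument distinguishing optimum $0$ from optimum at least $1$. Your elaboration of the NO-case lower bound (tightness of total cost $2P$ against the two budgets forcing every $u_i$ to be treated exactly once) is a slightly more detailed justification of the step the paper states briefly, but it is the same argument.
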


\begin{proof}
 For a given instance of PARTITION,  we consider an instance of BFTS with unit weights $w_{tij}$ and $T=2$.
We consider a path with $n$ vertices $u_1, \cdots, u_{n}$ and $n-1$ vertices $v_1, \cdots, v_{n-1}$ connected as follows:\\
\[u_1 - v_1 - u_2 - v_2 - \cdots -  u_i - v_i - \cdots - u_{n-1} - v_{n-1} - u_{n}.\]
\\
All the $2n-1$ vertices have zero initial age and a unit fuel threshold. All $u_i$ vertices have a treatment cost equal to $s_i$ in each period, thus they correspond to elements $s_i \in S$ of the instance of PARTITION. The budget in each period is $b=\frac{\sum_{i=1}^{n}s_i}{2}$. All $v_i$ vertices have treatment cost $b+1$, so they cannot be treated and become old in the second period.\\
If the given instance of PARTITION has a solution, we can treat the vertices $u_i$ associated with elements $s_i \in S'$ in the first period, the remaining vertices $u_i$ in the second period (as in both periods the budget is $\frac{\sum_{i=1}^{n}s_i}{2}$) and obtain a solution for BFTS with no old contiguous areas and zero objective value.\\
Else, any optimal solution of the BFTS instance would induce at least two old contiguous areas, implying a positive objective function. Hence, a polynomial time approximation algorithm for BFTS would allow us to decide the PARTITION problem by checking if the approximate solution of BFTS is strictly positive. Obviously, this is not possible unless $\mathcal{P} = \mathcal{NP}$.
\end{proof}

Notice that the inapproximability result of Theorem \ref{TheoInapprox} already applies for BFTS on paths with $T=2$ and unit $w_{tij}$. 

\section{Enhanced ILP formulations for BFTS}
\label{Sec:formulations}

We focus now on the practical solvability of BFTS. We introduce the following simplified ILP formulation that avoids the use of big-M constraints as in the model proposed in \cite{MiHeMa}. We consider binary variables $x_{ti}$ equal to 1 iff area $i \in A$ is treated at time $t$ and binary variables $q_{tij}$ equal to 1 iff two adjacent areas $i$ and $j$ are old in period $t$. Correspondingly, we obtain the following model denoted as $M_1^{BFTS}$:

\begin{align}
&M_1^{BFTS}: &  \nonumber \\ 
\min &  \sum_{t = 1}^T  \sum_{i \in A} \sum_{j \in \phi_i} w_{tij} q_{tij}
      \label{eq:obj} \\
\mbox{s.t.}  \nonumber \\
&   \sum\limits_{\substack{p = \max \\ \{1; t - \sigma_i \}}}^{t} x_{pi} +  \sum\limits_{\substack{p = \max \\ \{1; t - \sigma_j \}}}^{t} x_{pj} + q_{tij} \geq 1 &  \forall i \in A, j \in \phi_i, t = 1,\dots,T: \nonumber  \\
& &  t > \max \{ \sigma_i - a_i,\; \sigma_ j - a_j \} \label{eq:Cover}  \\
&  \sum_{i \in A}  c_{ti}x_{ti} \leq b_t & t = 1, \dots, T  \label{eq:Budget}  \\   
&  x_{ti} \in \{0,1\}       &  \forall i \in A, t = 1, \dots, T  \label{eq:Xdomain} \\
&  q_{tij} \in \{0,1\}       &  \forall i \in A, j \in \phi_i, t = 1, \dots, T  \label{eq:Qdomain} 
\end{align}

The objective function \eqref{eq:obj} minimizes the weighted sum of old adjacent areas over the whole time horizon. Constraints \eqref{eq:Cover} detect the presence of old adjacent cells $i$, $j$ in period $t$: if in the relevant periods both areas $i$ and $j$ are not treated (i.e. all the related variables $x$ are equal to zero), then we have $q_{tij} = 1$; else we have $q_{tij} = 0$ given the objective function \eqref{eq:obj}. Constraints \eqref{eq:Budget} represent the budget constraints for each period $t=1, \dots, T$. Constraints \eqref{eq:Xdomain} and \eqref{eq:Qdomain} define the domain of the variables. We point out that variables $q_{tij}$ could be defined as nonnegative continuous variables, namely $q_{tij} \geq 0$, as variables $q_{tij}$ are set to either 0 or 1 in any optimal solution due to constraints \eqref{eq:Cover} and the objective function \eqref{eq:obj}. Notice also that model  $M_1^{BFTS}$ can be seen as a variant of the set covering problem.
This formulation considerably enhances the MILP model for BFTS proposed in \cite{MiHeMa}. As illustrated in our computational tests (see Section \ref{SecTests}), a MILP solver launched on model $M_1^{BFTS}$ already provides better performances than the reference model from the literature.\\

\smallskip
In model $M_1^{BFTS}$ , we notice that, for any given area $k$, the sum $\sum\limits_{p = \max \{1; t - \sigma_k \}}^{t} x_{pk}$ has to be repeated each time the area is considered in constraints \eqref{eq:Cover}. This may induce a large number of nonzero coefficients in the constraint matrix whenever the value of each threshold $\sigma_i$ and the overall number of adjacent areas are not small. Generally speaking, a large number of nonzero coefficients may affect the performance of a MILP solver launched on model $M_1^{BFTS}$. Hence, we propose an alternative ILP formulation where we replace each sum $\sum\limits_{p = \max \{1; t - \sigma_k \}}^{t} x_{pk}$ in constraints \eqref{eq:Cover} with one auxiliary binary variable $y_{tk}$ equal to 1 if area $k$ is young at time $t$ or equal to 0 if area $k$ is old at time $t$. This condition is ensured by introducing an additional constraint $\sum\limits_{p = \max \{1; t - \sigma_k \}}^{t} x_{pk} - y_{tk} \geq 0$. The constraint states if no treatment occurs for area $k$ in the relevant periods up to $t$, then the area will be old, i.e. $y_{tk} = 0$. Correspondingly, we obtain the following model $M_2^{BFTS}$:
\begin{align}
&M_2^{BFTS}:		& \nonumber \\ 
\min 
&  \sum_{t = 1}^T  \sum_{i \in A} \sum_{j \in \phi_i} w_{tij} q_{tij}
      \label{eq:obj2} \\
\mbox{s.t.} \nonumber \\
& \quad y_{ti} + y_{tj} + q_{tij} \geq 1    &  \forall i \in A, j \in \phi_i, t = 1,\dots,T: \nonumber  \\
& &  t > \max \{ \sigma_i - a_i,\; \sigma_ j - a_j\} \label{eq:Cover2}  \\
& \sum\limits_{\substack{p = \max \\ \{1; t - \sigma_i \}}}^{t} x_{pi} - y_{ti} \geq 0&  \forall i \in A, t = 1,\dots,T: t > \sigma_i - a_i \label{eq:theo2} \\
&  \sum_{i \in A}  c_{ti}x_{ti} \leq b_t & t = 1,\dots,T \label{eq:Budget2}  \\   
&  y_{ti}, x_{ti} \in \{0,1\}  & \forall i \in A, t = 1,\dots,T \label{eq:Xdomain2} \\
&  q_{tij} \in \{0,1\}       & \forall i \in A, j \in \phi_i, t = 1,\dots,T \label{eq:Qdomain2} 
\end{align}
The computational tests on reference instances from the literature (see Section \ref{SecTests}) show that model $M_2^{BFTS}$ provides better performances than model  $M_1^{BFTS}$. \\

\smallskip

For the sake of exposition, from now on we refer only to model $M_2^{BFTS}$ in the description of problem variants and of the matheuristic approach presented in Section \ref{Sec:Matheuristic}. However, we remark that the same analysis can be easily extended to model $M_1^{BFTS}$ as well. 

\subsection*{Column generation}
We also investigated an ILP formulation with an exponential number of variables and evaluated a corresponding column generation approach (see, e.g., \cite{CG1} for an introduction on column generation). We decomposed model $M_2^{BFTS}$ by considering all possible feasible treatment plans of the areas in each period $t= 1, \dots, T$. This induced a number of plans that was exponential in the number of areas. Correspondingly, for each period $t$ we defined a pricing problem to compute feasible treatment plans according to the budget constraints. The choice of a derived treatment plan $s$ in period $t$ was associated with a binary variable $z_{s}^t$ in a master problem containing also variables $y_{ti}$, $q_{tij}$ and without variables $x_{ti}$. After some preliminary computational tests, we noticed that solving the linear relaxation of the master problem did not require negligible computational times even in small instances. The presence of variables $y_{ti}$, $q_{tij}$ and related objective function and constraints turned out to slow down the iterative column generation process. In the light of this, the use of such a column generation approach and possible related branch and price schemes does not seem a viable option for BFTS.

\subsection{Periodic BFTS}
From a practical point of view, it might be of interest for the decisions makers to define periodic policies for fuel management. This motivates us to consider a new problem variant with periodic planning, here denoted as PBFTS. The goal is to find a continuously repeatable treatment plan with cycles of length $T$ starting from an initial state to be determined. In this variant we assume that the initial ages $a_i$ are set to zero. To handle periodicity, we replace constraints \eqref{eq:theo2} in model $M_2^{BFTS}$ with the following constraints:
\begin{align}
& \sum\limits_{p = t - \sigma_i }^{t} x_{pi} - y_{ti} \geq 0  &  \forall i \in A, t = 1,\dots,T: t - \sigma_i \geq 1  \label{eq:theo3} \\
& \sum\limits_{p = 1}^{t} x_{pi} + \sum\limits_{p = T + t - \sigma_i }^{T} x_{pi} - y_{ti} \geq 0&  \forall i \in A, t = 1,\dots,T: t - \sigma_i < 1  \label{eq:theo4} 
\end{align}
and denote the corresponding model as $M_2^{PBFTS}$. Similarly to  constraints \eqref{eq:theo2}, constraints \eqref{eq:theo3} state that if $t > \sigma_i$, then the fact that area $i$ is old or not in period $t$ depends on the presence of treatments in periods $t-\sigma_i, \dots, t$. If instead $t \leq \sigma_i$, also periods from the preceding cycle have to be considered according to constraints \eqref{eq:theo4}.

\section{A matheuristic approach}
\label{Sec:Matheuristic}
Computational tests on problem PBFTS (see Section \ref{SecTests}) show that the addition of periodicity makes the problem much more challenging so that already for several instances with $100$ cells, the considered ILP solver (CPLEX 12.8) applied to model $M_2^{PBFTS}$ fails to reach the optimal solution within $1800$ seconds of CPU time denoting the need of a heuristic algorithm. \\

We designed a matheuristic approach that relies on computing an initial solution from the optimal solution of the linear relaxation of $M_2^{PBFTS}$. Then, a local search procedure is applied to improve the incumbent solution. Both the initial solution and the improvement procedure can be also applied to BTFS without major modifications: what changes is just the underlying model used by the algorithms. \\ 

The main idea of the initial solution procedure (denoted as Algorithm \ref{Algo:InitialSolution})  is to progressively constraint to 0 or 1 the variables $x_{it}$, until all of them have been fixed. The algorithm begins from the first time period and fixes at most $K$ fractional variables (the ones with the highest value in the current solution), without exceeding the budget of the period. Then, the optimal solution of the model linear relaxation is updated, and the cycle is repeated. The current time period is increased each time all its variables have been fixed to an integer value. 

The detailed steps of the procedure are summarized as follows. The inputs (step 1) are the value of $K$ (maximum number of variables fixed at each iteration), and the underlying model $M$ (the linear relaxation of $M_2^{BFTS}$ or $M_2^{PBFTS}$). At first, the current time period $t_{curr}$ is set to the first one, the remaining usable budget $B_t$ is initialized to $b_t$ for each time period, and the sets $\Phi_0$ and $\Phi_1$ of indices $(t,i)$ of the $x_{it}$  variables constrained to $0$ or $1$ is initialized to the empty set (Step 2). Then, a first linear relaxation of model $M$ is performed, retrieving the initial solution $\bar{X}$ (step 3). The main cycle (step 4) is performed until the current solution $\bar{X}$ is not integer. Steps 5-13 fix to $0$ or $1$ (by adding their indices to $\Phi_0$ or $\Phi_1$) any variable $\bar{x}_{it}$ in solution $\bar{X}$ not yet constrained and equal to $0$ and $1$. In case a variable is set to $1$, the remaining budget of the period is updated accordingly (step 11). Cycle 14-26 fixes to $1$ at most $K$ variables $x_{it}$ in the current time period $t_{curr}$, considering them in order of non increasing (fractional) value. In particular, step 15 identifies the highest fractional value variable, which is added to $\Phi_1$ if the remaining budget of the current time period allows it (steps 20-22). All variables whose cost exceed the current remaining budget are set to 0 (steps 23-24), and the cycle is ended when all variables of time period $t_{curr}$ have been inserted in $\Phi_0$ or $\Phi_1$ (steps 16-18). Before ending the main cycle, the model with the additional constraints on the fixed variables is run, and the current solution updated. \\

\begin{algorithm}
	\caption{Initial solution.}
	\label{Algo:InitialSolution}
	\begin{algorithmic}[1]
		\STATE INPUT: The maximum number of variables fixed at each iteration $K$, model $M$ (linear relaxation of $M_2^{BFTS}$ or $M_2^{PBFTS}$)
		\STATE Set $t_{curr}=1; \Phi_0 = \Phi_1 = \emptyset; B_t = b_t, t \in 1, \cdots,T$
		\STATE Solve the linear relaxation of model $M$. Let $\bar{X}$ be the solution.
		\WHILE{solution $\bar{X}$ is not integer} 
		\FORALL {$(t,i)$ indices such that $t \geq t_{curr}; (t,i) \notin \Phi_0; (t,i) \notin \Phi_1$} 
		\IF {$\bar{x}_{ti} = 0$} 
		\STATE $\Phi_0 = \Phi_0 + (t,i)$
		\ENDIF
		\IF {$\bar{x}_{ti} = 1$} 
		\STATE $\Phi_1 = \Phi_1 + (t,i)$
		\STATE $B_t = B_t - c_{ti}$
		\ENDIF
		\ENDFOR
		\FOR{$count=1,\cdots,K$}
		\STATE Let $i$ be the index corresponding to the the highest $\bar{x}_{t_{curr},i}$ such that $(t_{curr},i) \notin \Phi_0, (t_{curr},i) \notin \Phi_1$.
		\IF {there is no such an index}
		\STATE $t_{curr}=t_{curr}+1$; 
		\STATE \textbf{break for}
		\ENDIF
		\IF{$B_{t_{curr}} > c_{t_{curr},i}$}
		\STATE $\Phi_1 = \Phi_1 + (t_{curr}, i)$
		\STATE $B_{t_{curr}} = B_{t_{curr}} - c_{t_{curr},i}$
		\ELSE
		\STATE $\Phi_0 = \Phi_0 + (t_{curr}, i)$		
		\ENDIF
		\ENDFOR
		\STATE Solve again the linear relaxation of model $M$, adding constraints $x_{ti} = 0$ for all $(t,i) \in \Phi_0$ and $x_{ti} = 1$ for all $(t,i) \in \Phi_1$. Let again $\bar{X}$ be its optimal solution.
		\ENDWHILE
		\STATE OUTPUT: the feasible (integer) solution $\bar{X}$.
	\end{algorithmic}
\end{algorithm}

The matheuristic improvement procedure (denoted as Algorithm \ref{Algo:MatheuristicAlgorithm}) starts with the solution found by the previous algorithm as current solution, and iteratively improves it with a scheme based on the neighborhood search approach. Each iteration explores the neighborhood by constructing a problem where the variables to be optimized refers to a subset of time periods, while other ones are fixed to the value they have in the current solution. The neighborhood size is variable, beginning with a smaller number of periods, increased when a complete exploration of the neighborhood has been performed without finding any improvement.

More in details, the procedure  requires the underlying ILP model ($M_2^{BFTS}$ or $M_2^{PBFTS}$), an initial solution $\bar{X}$, and two settings: the initial and final neighborhood size $T_{init}$ and $T_{final}$ (step 1). 
The main cycle (steps 3-21) reoptimizes at each iteration a part of the current solution, creating a problem with a consecutive number of $T_{dim} < T$ periods, with $T_{dim}$ ranging from $T_{init}$ to $T_{final}$. It begins setting the initial period to be reoptimized $t_{min}$ to 1 and $t_{impr}$, the period when the neighborhood exploration should end, to $T$ (step 3). Then, an internal cycle is repeated while $t_{min} \neq t_{impr}$ (steps 4-20). Here, the model for the neighborhood reoptimization is built: the variables to be reoptimized are all the $x_{it}$ with $t$ in the interval $[t_{min}, t_{min}+T_{dim}-1]$, but when the second limit exceed $T$ the reoptimization interval resumes from time $1$ becoming $[t_{min}, T] \cup [1, t_{min}+T_{dim}-T-1]$, implementing a cycling neighborhood. All the variables not belonging to the reoptimization interval are fixed to the value they already have in the current solution (steps 5-10). The model is then optimally solved, and its optimal objective function $OF(\bar{x})$ retrieved. If the new solution is better than the previous one, $t_{impr}$ is updated (steps 11-15). Note that at each iteration the optimal solution of the reoptimization problem can't be worse than the current one (the latter is still a feasible solution for the new model). Then, the new neighborhood is obtained increasing $t_{min}$, or setting it back to $1$ if $T$ has been reached (steps 16-19). 

\begin{algorithm}
	\caption{Matheuristic Improvement Procedure.}
	\label{Algo:MatheuristicAlgorithm}
	\begin{algorithmic}[1]
		\STATE INPUT: an initial solution $\bar{X}$, initial and final neighborhood size $T_{init}$ and $T_{final}$, model $M$ ($M_2^{BFTS}$ or $M_2^{PBFTS}$)
		\FOR {$T_{dim} = T_{init},\cdots, T_{final} $}
		\STATE Set $t_{min}=1, t_{impr} = T$
		\WHILE{$t_{min} \neq t_{impr}$}
		\STATE Consider model $M$ (without any additional constraint)
		\IF {$t_{min} + T_{dim} - 1 \leq T$}
		\STATE Add to $M$ constraints $x_{ti} = \bar{x}_{ti}$ for all $t$ such that $t < t_{min}$ or $ t > t_{min} + T_{dim} - 1$
		\ELSE
		\STATE Add to $M$ constraints $x_{ti} = \bar{x}_{ti}$ for all $t$ such that $ t_{min} + T_{dim} - T - 1 < t < t_{min}$
		\ENDIF
		\STATE $O_{last} = OF(\bar{X})$
		\STATE Optimally solve model $M$, retrieving the new solution $\bar{X}$
		\IF {$OF(\bar{X}) < O_{last}$}
		\STATE $t_{impr} = t_{min}$
		\ENDIF
		\STATE $t_{min} = t_{min} + 1$
		\IF {$t_{min} > T$}
		\STATE $t_{min} = 1$
		\ENDIF
		\ENDWHILE
		\ENDFOR
		\STATE OUTPUT: the improved solution $\bar{X}$
	\end{algorithmic}
\end{algorithm}

\section{Computational tests}
\label{SecTests}
We focused on grid graphs and generated instances according to the scheme proposed in \cite{MiHeMa}. We considered graphs with 25, 100, 225, 400, 900 and 1225 cells where each fuel age threshold $\sigma_i$ is randomly selected among 4, 8, or 12 years and each fuel age $a_i$ is randomly selected between 1 and 12 years. The time horizon $T$ is 10 years. Each cell is connected with three neighbouring cells by considering north-westerly prevailing wind direction, as indicated in \cite{MiHeMa}. Objective function weights and cells treatment costs are constant ($w_{tij} = c_{ti} = 1$ for all $t=1,\cdots,T$ and $i\in A$). Each budget value $b_t$ is equal to 5\% of the total treatment cost of all the cells, as considered in the most difficult instances in \cite{MiHeMa}.\\
A second instance type has been then generated, considering both weights and treatment costs as random integer numbers uniformly distributed in $[1,20]$.\\ 
For each of the two instance types and 6 landscape size we generated 10 instances, for a total of 120 different instances.
All tests have been run using MILP solver CPLEX 12.8 running on an Intel i5 CPU @ 3.0 GHz with 16 GB of RAM, within a time limit of 1800 seconds.\\

In the first computational tests, we tested on the non periodic version of the problem (BFTS) the model in \cite{MiHeMa} and our models $M_1^{BFTS}$ and $M_2^{BFTS}$. In Table \ref{tab:CPUTime1_1}, we report the performances of the models on the instances with constant costs and weights. For each landscape size, we report the average solution value (column ``Average Sol. Value"), the average CPU time in seconds (column ``Average Time"), and the number of instances, out of 10, solved to optimality with each model (column ``Opt"). In all tables, bold entries highlight the best average solution value for each landscape size. The results illustrate that the model proposed in \cite{MiHeMa} is not capable of solving to optimality any of the instances with 400 or more cells. Model $M_1^{BFTS}$ is able to solve to optimality all the instances up to 400 nodes, while model $M_2^{BFTS}$ exhibits much better performances and reaches all optimal solutions within the time limit. 
A possible explanation on the improved performances of the latter model is that in all instances the number of nonzero coefficients in model $M_2^{BFTS}$ is more than halved (on average) with respect to model $M_1^{BFTS}$.

\begin{table}[ht]
\scalebox{0.85}{
	\centering
	\scriptsize
	\begin{tabular}{|l|*{3}{c|}*{3}{c|c|}} 
		\hline
		&   \multicolumn{3}{|c|}{Model} 			& \multicolumn{3}{|c|}{Model}  & \multicolumn{3}{|c|}{Model} \\ 
		&   \multicolumn{3}{|c|}{in \cite{MiHeMa}} 	& \multicolumn{3}{|c|}{$M_1^{BFTS}$}  & \multicolumn{3}{|c|}{$M_2^{BFTS}$}\\ \hline
		
		\text{Landscape} & Average & Average & Opt & Average & Average & Opt & Average & Average & Opt \\															
		\text{Size}  &  Sol. Value & Time & & Sol. Value & Time & & Sol. Value & Time & \\	\hline
		
		25 (5 by 5)		&	\textbf{113.7}	&	1.2		&	10	&	\textbf{113.7}	&	0.0		&	10	&	\textbf{113.7}	&	0.0		&	10		\\ \hline
		100 (10 by 10)	&	\textbf{360.2}	&	40.9	&	10	&	\textbf{360.2}	&	3.1		&	10	&	\textbf{360.2}	&	1.3		&	10		\\ \hline
		225 (15 by 15)	&	925.7	&	770.5	&	7	&	\textbf{925.7}	&	68.7	&	10	&	\textbf{925.7}	&	7.3		&	10		\\ \hline
		400 (20 by 20)	&	1645.6	&	1800.0	&	0	&	\textbf{1640.5}	&	362.7	&	10	&	\textbf{1640.5}	&	38.2	&	10		\\ \hline
		900 (30 by 30)	&	4766.8	&	1800.0	&	0	&	3936.1	&	1800.0	&	0	&	\textbf{3809.9}	&	282.4	&	10		\\ \hline
		1225 (35 by 35)	&	6811.5	&	1800.0	&	0	&	6027.7	&	1800.0	&	0	&	\textbf{5257.4}	&	867.9	&	10		\\ \hline
		
	\end{tabular}
}
	\caption{BFTS instances with constant costs and weights.}
	\label{tab:CPUTime1_1}
\end{table}
The same tests have been repeated for the instances with variable costs and weights. The results are summarized in table \ref{tab:CPUTime1_2}. The model proposed in \cite{MiHeMa} (adapted, adding weights $w_{tij}$ in the objective function) is capable of solving to optimality all the instances only for the landscapes with $25$ and $100$ cells. The proposed models are able to solve all the instances up to 225 cells. Again, the best performing model is $M_2^{BFTS}$, able to solve all instances but 3 in the given time limit (one each for the 400, 900 and 1225 cells sets). Considering that 117 out of 120 instances are already optimally solved by the model, the matheuristic has not been tested on the non periodic case.\\

\begin{table}[ht]
\scalebox{0.85}{
	\centering
	\scriptsize
	\begin{tabular}{|l|*{3}{c|}*{3}{c|c|}} 
		\hline
		&   \multicolumn{3}{|c|}{Model} 			& \multicolumn{3}{|c|}{Model}  & \multicolumn{3}{|c|}{Model} \\ 
		&   \multicolumn{3}{|c|}{in \cite{MiHeMa}} 	& \multicolumn{3}{|c|}{$M_1^{BFTS}$}  & \multicolumn{3}{|c|}{$M_2^{BFTS}$}\\ \hline
		
		\text{Landscape} & Average & Average & Opt & Average & Average & Opt & Average & Average & Opt \\															
		\text{Size}  &  Sol. Value & Time & & Sol. Value & Time & & Sol. Value & Time & \\	\hline
		
25 (5 by 5)		&	\textbf{590.2}	&	0.0		&	10	&	\textbf{590.2}	&	0.0		&	10	&	\textbf{590.2}	&	0.0		&	10		\\ \hline
100 (10 by 10)	&	\textbf{1468.0}	&	270.2	&	10	&	\textbf{1468.0}	&	22.9	&	10	&	\textbf{1468.0}	&	27.7	&	10		\\ \hline
225 (15 by 15)	&	3323.3	&	1539.0	&	3	&	\textbf{3320.4}	&	189.5	&	10	&	\textbf{3320.4}	&	219.4	&	10		\\ \hline
400 (20 by 20)	&	6191.5	&	1800.0	&	0	&	\textbf{6175.1}	&	885.7	&	8	&	\textbf{6175.1}	&	695.0	&	9		\\ \hline
900 (30 by 30)	&	20430.9	&	1800.0	&	0	&	14756.9	&	1193.5	&	5	&	\textbf{14756.1}	&	1041.8	&	9		\\ \hline
1225 (35 by 35)	&	32824.4	&	1800.0	&	0	&	21027.3	&	1424.4	&	5	&	\textbf{21026.5}	&	1104.8	&	9		\\ \hline
	\end{tabular}
}
	\caption{BFTS instances with variable costs and weights.}
	\label{tab:CPUTime1_2}
\end{table}

We then considered the same 120 instances in the context of periodic planning and benchmarked the proposed matheuristic against model $M_2^{PBFTS}$. The CPU time limit is again 1800 seconds. 
The matheuristic have been tested with several configurations: in the initial solution we tested $K$ ranging from $1$ to $100$, while in the improvement procedure the interval $T_{init}-T_{final}$ was ranging, with different combinations, from $2$ to $6$. The results presented here refer to the settings $K=20$, $T_{init}=4$, $T_{end}=5$, which guarantee a good trade-off between running times and solution quality. \\

The results for the instances with constant costs and weights are reported in Table \ref{tab:CPUTime2_1}. In general, the presence of a periodic planning makes the instances harder to solve. In fact, the proposed model does not obtain all optimal solutions and is outperformed, in terms of solution quality,  by the matheuristic on large instances with 400, 900 and 1225 cells. \\

\begin{table}[ht]
	\centering
	%\scriptsize
\begin{tabular}{|l|*{3}{c|}*{2}{c|c|}} 
  \hline
    &   \multicolumn{3}{|c|}{Model} & \multicolumn{2}{|c|}{Matheuristic}  \\ 
        &   \multicolumn{3}{|c|}{$M_2^{PBFTS}$} & \multicolumn{2}{|c|}{approach}  \\ \hline
								\text{Landscape} & Average & Average & Opt & Average & Average \\															
									\text{Size}  &  Sol. Value & Time & & Sol. Value & Time \\			\hline	
25 (5 by 5)		&	\textbf{166.1}	&	0.5		&10	& 166.5		&	0.1		\\ \hline
100 (10 by 10)	&	\textbf{519.5}	&	916.0	&7	& 520.2		&	6.4		\\ \hline
225 (15 by 15)	&	\textbf{1340.2}	&	1800.0	&0	& 1342.2	&	20.9		\\ \hline
400 (20 by 20)	&	2347.9	&	1800.0	&0	& \textbf{2347.4}	&	64.9		\\ \hline
900 (30 by 30)	&	7151.8	&	1800.0	&0	& \textbf{5426.6}	&	360.6		\\ \hline
1225 (35 by 35)	&	9035.3	&	1800.0	&0	& \textbf{7483.9}	&	775.3		\\ \hline

\end{tabular}
		\caption{PBFTS instances with constant costs and weights.}
		\label{tab:CPUTime2_1}
\end{table}

Table \ref{tab:CPUTime2_2} presents the results for the periodic planning on the instances with variable costs and weights. Again, the model is not able to optimally solve the larger instances within the time limit, and the matheuristic approach outperforms it in terms of solution quality for the larger landscapes (900 and 1225 cells).

\begin{table}[ht]
	\centering
	%\scriptsize
	\begin{tabular}{|l|*{3}{c|}*{2}{c|c|}} 
		\hline
		&   \multicolumn{3}{|c|}{Model} & \multicolumn{2}{|c|}{Matheuristic}  \\ 
		&   \multicolumn{3}{|c|}{$M_2^{PBFTS}$} & \multicolumn{2}{|c|}{approach}  \\ \hline
		\text{Landscape} & Average & Average & Opt & Average & Average \\															
		\text{Size}  &  Sol. Value & Time & & Sol. Value & Time \\			\hline	
25 (5 by 5)		&	\textbf{849.4}		&	0.1		&10	& 850.1		&	0.2		\\ \hline
100 (10 by 10)	&	\textbf{2399.1}		&	1052.5	&5	& 2409.9	&	9.7		\\ \hline
225 (15 by 15)	&	\textbf{5754.9}		&	1647.1	&1	& 5769.1	&	38		\\ \hline
400 (20 by 20)	&	\textbf{10822.8}		&	1800.0	&0	& 10853.4	&	105.1		\\ \hline
900 (30 by 30)	&	25347.8		&	1800.0	&0	& \textbf{25077.5}	&	490.9		\\ \hline
1225 (35 by 35)	&	45109.6		&	1800.0	&0	& \textbf{35241.5}	&	1043.8		\\ \hline

	\end{tabular}
	\caption{PBFTS instances with variable costs and weights.}
	\label{tab:CPUTime2_2}
\end{table}

\section{Conclusions and future directions}
\label{Sec:concl}
In this work, we proposed improved mathematical models and a matheuristic approach for the  Budget constrained Fuel Treatment Scheduling Problem, a well-known optimization problem in fuel management. We evaluated the effectiveness of the developed models and algorithms on a large set of instances from the literature and for a problem variant with periodic planning. 

In future research, it would be worthy to consider further generalizations of the problem, in order to identify better fits with real-life scenarios. In this respect,  extensions of the ILP models proposed in \cite{MiHeMa} were considered in \cite{leReHe}, \cite{MiHeMa}, \cite{RaOzHeRe} and \cite{16RaOzHeRe}. 
Interestingly, our approaches can be easily adapted to most of those extensions as mentioned in the Introduction. 
%We cite, among others, the possible request of having no adjacent areas treatment in the same time period and/or a minimum/maximum number of periods of time between two treatments of the same areas.  
In addition, requirements to preserve the fauna habitat could be taken into account in deriving fuel management strategies. For instance, an area can be treated in a given period only if the present fauna could move to a sufficient number of adjacent areas with a suitable habitat. 
%We refer to \cite{leReHe} and \cite{RaOzHeRe} for more details about the problem variants considered. 
Finally, here we sticked to the use of landscapes divided into a grid of square cells  as assumed in the work of \cite{MiHeMa} but our approach is extendable to any landscape that can fit into a network representation.

\section*{Acknowledgement}
This work was funded by the GEO-SAFE project and the EU Horizon2020 RISE programme,  grant agreement No 691161.
%\section*{References} 

\end{document}